\newtheorem{lemma}{Lemma}[section]
\newtheorem{example}[lemma]{Example}
\newtheorem{proposition}[lemma]{Proposition}
\newcommand{\mR}{\mathbb{R}}
\newcommand{\ith}{i^{\tiny{\text{th}}}}
\newcommand{\jth}{j^{\tiny{\text{th}}}}
\newcommand{\Exp}{\mathrm{Exp}}
\newcommand{\Ln}{\mathrm{Ln}}
\DeclareMathOperator{\rank}{rank}
\title{\LARGE \bf Model-order reduction of biochemical reaction networks}
\author{Shodhan Rao\thanks{Center for Systems Biology, University of Groningen, email: {\tt \small s.rao@umcg.nl}} \and Arjan van der Schaft\thanks{Johann Bernoulli Institute for Mathematics and Computer Science, University of Groningen, e-mail: {\tt\small A.J.van.der.Schaft@rug.nl}} \and Karen van Eunen \thanks{Center for Liver, Digestive and Metabolic Diseases (LDS). Department of Paediatrics, Universitair Medisch Centrum Groningen, e-mail: {\tt\small k.van.eunen@med.umcg.nl}} \and Barbara M. Bakker\thanks{Center for Liver, Digestive and Metabolic Diseases (LDS). Department of Paediatrics, Universitair Medisch Centrum Groningen, e-mail: {\tt\small b.m.bakker@med.umcg.nl}} \and Bayu Jayawardhana\thanks{Discrete Technology and Production Automation, University of Groningen, email: {\tt \small b.jayawardhana@rug.nl}} 
}
\begin{document}

\maketitle
\thispagestyle{empty}
\pagestyle{empty}

\begin{abstract}
In this paper we propose a model-order reduction method for chemical reaction networks governed by general enzyme kinetics, including the mass-action and Michaelis-Menten kinetics. The model-order reduction method is based on the Kron reduction of the weighted Laplacian matrix which describes the graph structure of complexes in the chemical reaction network. We apply our method to a yeast glycolysis model, where the simulation result shows that the transient behaviour of a number of key metabolites of the reduced-order model is in good agreement with those of the full-order model.
\end{abstract}

\section{Motivation and introduction}

The dynamics of a biochemical reaction network can be described by a set of differential equations involving a stoichiometric matrix and reaction rates. The solution to these equations describes the evolution and dynamics of the concentrations of all the metabolites or biochemical species in the network. The stoichiometric matrix, which consists of (positive and negative) integer elements, captures the basic conservation laws of the reactions. The reaction rates are functions of the concentrations and they prescribe the dynamics that take place at each reaction in the network based on the underlying enzyme kinetics. The common type of reaction rates are mass-action kinetics and Michaelis-Menten-type kinetics laws which can include competitive and non-competitive inhibition mechanism. 

Thus a biochemical reaction network $\Sigma$ can be written as
\begin{equation*}
\Sigma \ : \  \left\{ 
\begin{array}{rl} 
\dot{x} & = Sv(x) + S_bv_b \\
y & = Cx 
\end{array}\right.
\end{equation*}
where $x$ and $v_b$ denote the vectors of metabolite concentrations and boundary fluxes, respectively, $y$ is the measured concentrations (or the metabolites of interest), the function $v$ is the reaction rates and $C$ denotes the observation matrix of appropriate dimension. The matrix $S$ and $S_b$ are the stoichiometric matrices of the internal and boundary fluxes, respectively.  

In this paper, we describe a new model-order reduction method to $\Sigma$ by reducing the number of reactions and metabolites that are involved, such that the dynamics of the measured concentration of the reduced model is close to the full-order one (with the external fluxes). Our model-order reduction method is based on the Kron reduction of the underlying weighted Laplacian describing graph structure of complexes in the chemical reaction networks. A similar technique is also employed in the Kron reduction method for model reduction of resistive electrical networks described in \cite{Kron}. 

In our previous works \cite{Rao, vanderSchaft2012}, we have proposed a framework for describing the dynamics of balanced chemical reaction networks governed by mass-action kinetics where, using the notion of complexes, the dynamics can be written by a complex stoichiometric matrix and a symmetric weighted Laplacian matrix. The symmetric Laplacian matrix corresponds to the graph structure of the reactions and the weights come from the underlying equilibrium constants in each reaction. The model-order reduction method that we proposed in \cite{Rao, vanderSchaft2012} is based on the application of Kron reduction to this Laplacian matrix. In this paper, we extend the method to a general class of biochemical reaction networks. 


For biochemical reaction networks, the model-order reduction technique for nonlinear systems is still underdeveloped. The singular perturbation method and quasi steady-state approximation (QSSA) approach are the most commonly used techniques, where the reduced state contains a part of the metabolites of the full model. In the thesis by H\"ardin \cite{Hardin2010}, the QSSA approach is extended by considering the higher-order approximation in the computation of quasi steady-state. Sunn\aa ker {\it et al.} in \cite{Sunnaker2011} proposed a reduction method by identifying variables that can be lumped together and can be used to infer back the original state. In Prescott \& Papachristodoulou \cite{Prescott2012}, a method to compute the upper-bound of the error is proposed based on sum-of-squares programming. The application of these techniques to general kinetics laws, such as Michaelis-Menten, poses a significant computational problem.  

Our proposed method, on the other hand, offers a simple approach to model-order reduction that can be extended directly to general kinetics. Moreover, the resulting reduced-order model retains the structure of the kinetics and gives result to a reduced biochemical reaction network, which enables a direct biochemical interpretation.  
      
We show the application of our model reduction technique to the yeast glycolysis model described in \cite{Karen}. We have simulated the transient behaviour of the metabolites that are not eliminated during the model reduction procedure. We show that there is a good agreement between the transient behaviour of the concentration of most of such metabolites when comparing the full network to the reduced network.


The paper is organized as follows. In Section II, we introduce tools from stoichiometry of reactions and graph theory that are required to derive our mathematical formulation. In section III, we explain enzyme kinetics, and then derive our formulation. In section IV, we propose our model reduction method. In section V, we show the application of our method to a yeast glycolysis model and in section VI, we present conclusions based on our results.

\noindent\emph{\bf Notation}:  The space of ${n}$ dimensional real vectors is denoted by $\mathbb{R}^{{n}}$,
and the space of ${m}\times {n}$ real matrices by $\mathbb{R}^{{m}\times {n}}$. The space of ${n}$ dimensional real vectors consisting of all strictly positive entries is denoted by $\mR_+^{n}$ and the space of ${n}$ dimensional real vectors consisting of all nonnegative entries is denoted by $\bar{\mR}_+^{n}$. The rank of a real matrix $A$ is denoted by $\rank A$. $\mathds{1}_m$ denotes a vector of dimension $m$ with all entries equal to 1. The time-derivative $\frac{dx}{dt}(t)$ of a vector $x$ depending on time $t$ will be usually denoted by $\dot{x}$. $I_n$ denotes an identity matrix of dimension $n$.

Define the mapping
$\mathrm{Ln} : \mathbb{R}_+^m \to \mathbb{R}^m, \quad x \mapsto \mathrm{Ln}(x),$
as the mapping whose $i$-th component is given as
$\left(\mathrm{Ln}(x)\right)_i := \mathrm{ln}(x_i).$
Similarly, define the mapping
$\mathrm{Exp} : \mathbb{R}^m \to \mathbb{R}_+^m, \quad x \mapsto \mathrm{Exp}(x),$
as the mapping whose $i$-th component is given as
$\left(\mathrm{Exp}(x)\right)_i := \mathrm{exp}(x_i).$

\section{Chemical reaction network structure}
In this section, we introduce the tools necessary in order to derive our mathematical formulation of the dynamics of chemical reaction networks. First we introduce the concept of stoichiometric matrix of a reaction network. We then introduce the concept of a complex graph, which was first introduced in the work of Horn \& Jackson and Feinberg (\cite{HornJackson, Horn, Feinberg}). 

\subsection{Stoichiometry}\label{sec:stoich}
Consider a chemical reaction network involving $m$ chemical species (metabolites), among which $r$ chemical reactions take place. The basic structure underlying the dynamics of the vector $x \in \mathbb{R}_+^m$ of concentrations $x_i, i=1,\cdots, m,$ of the chemical species is given by the {\it balance laws} $\dot{x} = Sv$, where $S$ is an $m \times r$ matrix, called the {\it stoichiometric matrix}. The elements of the vector $v \in \mathbb{R}^r$ are commonly called the (reaction) {\it fluxes} or {\it rates}. The stoichiometric matrix $S$, which consists of (positive and negative) integer elements, captures the basic conservation laws of the reactions. It already contains useful information about the network dynamics, {\it independent} of the precise form of the reaction rate $v(x)$. Note that the reaction rate depends on the governing law prescribing the dynamics of the reaction network.

We now show how to construct the stoichiometric matrix for a reaction network with the help of an example shown in Fig. \ref{fig:eg1}.
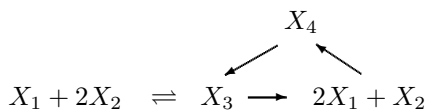
\begin{figure}[h]
\centerline{
  \scalebox{1}{
\ifx\JPicScale\undefined\def\JPicScale{1}\fi
\unitlength \JPicScale mm
\begin{picture}(54,10)(0,0)
\put(14,0){\makebox(0,0)[cc]{$X_1 + 2X_2$}}

\put(34,0){\makebox(0,0)[cc]{$X_3$}}

\put(27,0){\makebox(0,0)[cc]{$\rightleftharpoons$}}

\linethickness{0.3mm}
\put(38,0){\line(1,0){5}}
\put(43,0){\vector(1,0){0.12}}
\put(54,0){\makebox(0,0)[cc]{$2X_1 + X_2$}}

\linethickness{0.3mm}
\multiput(35,3)(0.21,0.12){33}{\line(1,0){0.21}}
\put(35,3){\vector(-2,-1){0.12}}
\put(45,10){\makebox(0,0)[cc]{$X_4$}}

\linethickness{0.3mm}
\multiput(47,7)(0.18,-0.12){33}{\line(1,0){0.18}}
\put(47,7){\vector(-3,2){0.12}}
\end{picture}
     }
   }
\caption{Example of a reaction network}
\label{fig:eg1}
\end{figure}

Note that the above reaction has 5 reactions among four species $X_1$, $X_2$, $X_3$ and $X_4$. Since the stoichiometric matrix maps the space of reactions to the space of species, it has dimension $4 \times 5$. The entry of $S$ corresponding to the $\ith$ row and $\jth$ column is obtained by subtracting the number of moles of the $\ith$ species on the product side from that on the substrate side for the $\jth$ reaction. Thus
\[
S=\begin{bmatrix}
-1 & 1 & 2 & -2 & 0 \\
-2 & 2 & 1 & -1 & 0 \\
1 & -1 & -1 & 0 & -1 \\
0 & 0 & 0 & 1 & 1
\end{bmatrix}
\]
for the reaction network depicted in Fig. \ref{fig:eg1}. 

\subsection{Complex graph}
The structure of a chemical reaction network cannot be directly captured by an ordinary graph. Instead, we will follow an approach originating in the work of Horn and Jackson \cite{HornJackson}, introducing the space of {\it complexes}. The set of complexes of a chemical reaction network is simply defined as the union of all the different left- and righthand sides (substrates and products) of the reactions in the network. Thus, the complexes corresponding to the network (\ref{fig:eg1}) are $X_1+2X_2$, $X_3$, $2X_1+X_2$ and $X_4$.

The expression of the complexes in terms of the chemical species is formalized by an $m \times c$ matrix $Z$, whose $\alpha$-th column captures the expression of the $\alpha$-th complex in the $m$ chemical species. For example, for the network depicted in Figure \ref{fig:eg1},
\[
Z=\begin{bmatrix} 1 & 0 & 2 & 0 \\ 2 & 0 & 1 & 0 \\0 & 1 & 0 & 0 \\ 0 & 0 & 0 & 1\end{bmatrix}.
\]
We will call $Z$ the {\it complex stoichiometric matrix} of the network. Note that by definition all elements of the matrix $Z$ are non-negative integers.

Since the complexes are the left- and righthand sides of the reactions, they can be naturally associated with the vertices of a {\it directed graph} $\mathcal{G}$ with edges corresponding to the reactions. Formally, the reaction $ \alpha  \longrightarrow \beta$ between the $\alpha$-th (reactant) and the $\beta$-th (product) complex defines a directed edge with tail vertex being the $\alpha$-th complex and head vertex being the $\beta$-th complex. The resulting graph will be called the {\it complex graph}.

Recall, see e.g. \cite{Bollobas}, that any graph is defined by its {\it incidence matrix} $B$. This is a $c \times r$ matrix, $c$ being the
number of vertices and $r$ being the number of edges, with $(\alpha,j)$-th
element equal to $-1$ if vertex $\alpha$ is the tail vertex of edge $j$ and $1$ if vertex $\alpha$ is the head vertex of edge $j$, while $0$ otherwise. Thus the structure of the complex graph is described by a $c \times r$ incidence matrix $B$.

Obviously, there is a close relation between the matrix $Z$ and the stoichiometric matrix $S$. In fact, it is easily checked that
\[
S = ZB, \quad \text{hence} \quad \dot{x}=ZBv(x)
\]
with $B$ the incidence matrix of the complex graph.

In many cases of interest, especially in biochemical reaction networks, chemical reaction networks are intrinsically {\it open}, in the sense that there is a continuous exchange with the environment (in particular, inflow and outflow of chemical species and connection to other reaction networks). In this paper, we are particularly interested in inflows to and outflows from the complexes of the network. This will be modeled by a vector $v_b \in \mathbb{R}^c$ consisting of $c$ {\it boundary} (or, exchange) {\it fluxes}, leading to an extended model
\begin{equation}\label{stoichiometry}
\dot{x} = ZBv(x)+ Zv_b
\end{equation}

\section{The dynamics of networks governed by enzyme kinetics}
Recall that for a chemical reaction network, the relation between the reaction rates and species concentrations depends on the governing laws of the reactions involved in the network. In this section, we explain this relation for reaction networks governed by enzyme kinetics. In other words, if $v$ denotes the vector of reaction rates and $x$ denotes the species concentration vector, we show how to construct $v(x)$ for reaction networks governed by enzyme kinetics. 

Let $Z_{\mathcal{S}_j}$ denote the column of the complex stoichiometric matrix $Z$ corresponding to the substrate $\mathcal{S}_j$ of the $j$-th reaction of a chemical reaction network. Let $k_j$ denote the rate constant of the $\jth$ reaction of the network. Then the reaction rate of the $\jth$ reaction of the chemical reaction network between the $\jth$ substrate $\mathcal{S}_j$ and the $\jth$ product $\mathcal{P}_j$ is given by
\begin{equation}\label{eq:EK}
v_j(x)=d_j(x)k_{j} \exp\big(Z_{\mathcal{S}_j}^T \mathrm{Ln}(x)\big),
\end{equation}
where for $j=1,\ldots,r$, $d_j:\mR_+^m \rightarrow \mR_+$ is a rational function of its argument. Note that if the governing law of the $\jth$ reaction is \emph{mass action kinetics}, then $d_j(x)=1$.

As an example, consider the reaction 
\begin{equation}\label{eq:react}
X_1+3X_2 \longrightarrow X_3 +3X_4
\end{equation}
governed by Michaelis Menten kinetics which is the most common form of enzyme kinetics. For $i=1,\ldots,4$, let $x_i$ denote the concentration of the species $X_i$. Define $x:=[x_1 \quad x_2 \quad x_3 \quad x_4]^T$. The matrices $B$, $Z$ and $S$ for the reaction (\ref{eq:react}) are given by
\[
B = \begin{bmatrix}
-1 \\
1
\end{bmatrix} \quad \quad Z = \begin{bmatrix}
1 & 0 \\
3 & 0 \\
0 & 1 \\
0 & 3
\end{bmatrix} \quad \quad
S = \begin{bmatrix}
-1 \\
-3 \\
1 \\
3
\end{bmatrix}
\]
Note that $Z_{\mathcal{S}}$ is given by $Z_{\mathcal{S}}=\begin{bmatrix} 1 & 3 & 0 & 0 \end{bmatrix}^{T}$.
Let $K_1$, $K_2$, $K_3$ and $K_4$ denote the ``Michaelis" constants of the species $X_1$, $X_2$, $X_3$ and $X_4$ respectively. Let $V_f$ denote the maximum reaction rate of (\ref{eq:react}). In this case, $k=\frac{V_f}{K_1K_2^3}$. The net rate of the reaction (\ref{eq:react}) is given by
\[
v = d(x)kx_1x_2^3 = d(x)k \exp\big(Z_{\mathcal{S}}^T \mathrm{Ln}(x)\big)
\]
where the expression for $d(x)$ depends on the model used to define the rate of the reaction. One possibility for $d(x)$ is
\[
d(x)= \frac{1}{\left(1+\frac{x_1}{K_1}+\frac{3x_2}{K_2}\right)\left(1+\frac{x_3}{K_3}+\frac{3x_4}{K_4}\right)}
\]

Based on the formulation in (\ref{eq:EK}), we can describe the complete reaction network dynamics as follows. Let the reaction rate for the complete set of reactions be given by the vector $v(x)= \begin{bmatrix} v_1(x) & \cdots & v_r(x) \end{bmatrix}^T$. 
For every $\sigma,\pi \in \{1,\cdots, c\}$, define $a_{\pi \sigma}(x):=k_{j}d_j(x)$ if $(\sigma,\pi)=(\mathcal{S}_j,\mathcal{P}_j)$, $j \in \{1,\ldots,r\}$ and $a_{\sigma \pi}:=0$ otherwise. Define the {\it weighted adjacency matrix} $A$ of the complex graph as the matrix with $(\sigma,\pi)$-th element $a_{\sigma \pi}$, where $\sigma,\pi \in \{1, \cdots,c\}$.
Furthermore, define $L(x) := \Delta(x) - A(x)$, where $\Delta$ is the diagonal matrix whose $(\rho,\rho)$-th element is equal to the sum of the elements of the $\rho$-th column of $A$. Hereafter we refer to $L(x)$ as the \emph{weighted Laplacian} of the complex graph associated with the given network with species concentration vector $x$. It is a matter of straightforward verification to check that $\mathds{1}_c^T L(x) = 0$. It can also be verified that the vector $Bv(x)$ for the mass action reaction rate vector $v(x)$ is equal to $-L(x) \Exp \left(Z^T \Ln(x) \right)$, where the mapping $\Exp : \mathbb{R}^c \to \mathbb{R}^c_+$ has been defined at the end of the Introduction.
Hence the dynamics can be compactly written as
\[
\dot{x} = - Z L(x) \mathrm{Exp} \left(Z^T \mathrm{Ln}(x)\right)
\]
A similar expression of the dynamics corresponding to mass action kinetics, in less explicit form, was already obtained in \cite{Sontag}.

\subsection{The linkage classes of a complex graph} \label{sec:link}
A \emph{linkage class} of a chemical reaction network is a maximal set of complexes $\{\mathcal{C}_1,\ldots,\mathcal{C}_k\}$ such that $\mathcal{C}_i$ is connected by reactions to $\mathcal{C}_j$ for every $i,j \in \{1,\ldots,k\}, i \neq j$. It can be easily verified that the number of linkage classes $(\ell)$ of a network, which is equal to the number of connected components of the complex graph corresponding to the network, is given by $\ell =c-$ rank$(B)$ (the number of linkage classes in the terminology of \cite{HornJackson, Feinberg}).

\section{Model reduction}
For many purposes one may wish to reduce the number of dynamical equations of a chemical reaction network in such a way that the behaviour of a number of key metabolites is approximated in a satisfactory way. We propose a novel method for model reduction of chemical reaction networks governed by enzyme kinetics. Our method is based on the Kron reduction method for model reduction of resistive electrical networks described in \cite{Kron}; see also \cite{vdsSCL}. 

\subsection{Description of the method}
Our model reduction method is based on \emph{reduction of the complex graph} associated with the network. It is based on the following result regarding Schur complements of weighted Laplacian matrices.
\begin{proposition}\label{prop:WL}
Consider a chemical reaction network with weighted Laplacian matrix $L(x) \in \mR^{c \times c}$ corresponding to the concentration vector $x$. Let $\mathcal{V}$ denote the set of vertices of the complex graph associated with the network. Then for any subset of vertices $\mathcal{V}_r \subset \mathcal{V}$, the Schur complement $\hat{L}(x)$ of $L(x)$ with respect to the indices corresponding to $\mathcal{V}_r$ satisfies the following properties:
\begin{enumerate}
\item All diagonal elements of $\hat{L}(x)$ are positive and off-diagonal elements are nonnegative for $x \in \mR_+^m$.
\item $\mathds{1}_{\hat{c}}^T\hat{L}(x)=0$, where $\hat{c}:=c-\text{dim}(\mathcal{V}_r)$.
\end{enumerate} 
\end{proposition}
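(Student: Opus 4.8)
The plan is to show that $\hat L(x)$ is again a weighted Laplacian of the stated type by exploiting the \emph{M-matrix} structure of $L(x)$. Reorder the complexes so that the indices of $\mathcal V_r$ come last, write $d:=\dim(\mathcal V_r)$ so that $\hat c=c-d$, and partition
\[
L(x)=\begin{bmatrix} L_{11}(x) & L_{12}(x)\\ L_{21}(x) & L_{22}(x)\end{bmatrix},
\]
where $L_{22}(x)$ collects the rows and columns indexed by $\mathcal V_r$, so that $\hat L(x)=L_{11}(x)-L_{12}(x)L_{22}(x)^{-1}L_{21}(x)$ is $\hat c\times\hat c$. I would first record the sign pattern available for free: for $x\in\mR_+^m$ every weight $a_{\pi\sigma}(x)=k_jd_j(x)$ is strictly positive, so the off-diagonal entries of $L(x)$ are nonpositive, its diagonal entries are nonnegative, and $\mathds 1_c^TL(x)=0$. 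In particular the support of $L(x)$ coincides with the structure of the complex graph for every $x$, so all arguments below are uniform in $x$.

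The crux, and the step I expect to be the main obstacle, is the \emph{well-posedness} of the Schur complement: the invertibility of the pivot block $L_{22}(x)$ together with the nonnegativity of its inverse. Here I would invoke M-matrix theory. The block $L_{22}(x)$ is a $Z$-matrix (nonpositive off-diagonals) that is weakly diagonally dominant in its columns, being a principal submatrix of the column-sum-zero matrix $L(x)$, and it is \emph{strictly} dominant in precisely those columns carrying an edge from $\mathcal V_r$ to $\mathcal V\setminus\mathcal V_r$. Provided every complex in $\mathcal V_r$ can reach $\mathcal V\setminus\mathcal V_r$ along a directed path in the complex graph, the natural condition (automatic, for instance, when the graph is connected and $\mathcal V\setminus\mathcal V_r\neq\emptyset$), $L_{22}(x)$ is a weakly chained diagonally dominant $Z$-matrix and hence a nonsingular M-matrix, so $L_{22}(x)^{-1}\ge 0$ entrywise. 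Equivalently, $-L(x)^T$ is the generator of a continuous-time Markov chain on the complexes whose transitions are the reactions, and $L_{22}(x)$ is nonsingular exactly when $\mathcal V_r$ is transient relative to the absorbing set $\mathcal V\setminus\mathcal V_r$. This directed, generally \emph{asymmetric} setting is what makes the argument more delicate than the symmetric Kron reduction of resistive networks.

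Granting $L_{22}(x)^{-1}\ge 0$, the remaining properties follow cleanly. For the sign pattern I would argue directly on $\hat L(x)=L_{11}-L_{12}L_{22}^{-1}L_{21}$: the blocks $L_{12}$ and $L_{21}$ are off-diagonal parts of $L(x)$, hence entrywise nonpositive, and with $L_{22}^{-1}\ge 0$ the product $L_{12}L_{22}^{-1}L_{21}$ is entrywise nonnegative; since the off-diagonal entries of $L_{11}$ are nonpositive, every off-diagonal entry of $\hat L(x)$ is nonpositive, as required for $\hat L(x)$ to be a bona fide weighted Laplacian. For property~2 I would use $\mathds 1_c^TL(x)=0$ in block form, namely $\mathds 1_{\hat c}^TL_{11}+\mathds 1_d^TL_{21}=0$ and $\mathds 1_{\hat c}^TL_{12}+\mathds 1_d^TL_{22}=0$; substituting $\mathds 1_{\hat c}^TL_{12}=-\mathds 1_d^TL_{22}$ into $\mathds 1_{\hat c}^T\hat L(x)=\mathds 1_{\hat c}^TL_{11}-\mathds 1_{\hat c}^TL_{12}L_{22}^{-1}L_{21}$ collapses the second term to $\mathds 1_d^TL_{22}L_{22}^{-1}L_{21}=\mathds 1_d^TL_{21}$, yielding $\mathds 1_{\hat c}^T\hat L(x)=\mathds 1_{\hat c}^TL_{11}+\mathds 1_d^TL_{21}=0$. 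Finally, diagonal positivity is a consequence of these two facts: since each column of $\hat L(x)$ sums to zero and its off-diagonal entries are nonpositive, the diagonal entry equals the sum of the magnitudes of the off-diagonal entries in that column, which is nonnegative and is strictly positive exactly when the corresponding retained complex still has an outgoing reaction after reduction, guaranteed by the same reachability condition used for invertibility.
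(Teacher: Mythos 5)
Your argument for property 2 is exactly the paper's: the same partition of $L(x)$, the same block form of the identity $\mathds{1}_c^TL(x)=0$, and the same substitution that collapses $\mathds{1}_{\hat c}^TL_{12}L_{22}^{-1}L_{21}$ to $\mathds{1}_{c-\hat c}^TL_{21}$. Where you genuinely diverge is property 1: the paper does not prove it at all, but defers to the proof of Theorem 3.11 in Niezink's thesis (and to the symmetric case treated in van der Schaft's circuits paper), whereas you give a self-contained argument via M-matrix theory --- $L_{22}(x)$ is a column-diagonally-dominant $Z$-matrix, hence (under a reachability condition) a nonsingular M-matrix with $L_{22}(x)^{-1}\ge 0$ entrywise, from which the sign pattern of the Schur complement follows, and the diagonal positivity then drops out of property 2. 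This buys something the citation hides: you make explicit the well-posedness hypothesis that the paper silently assumes the moment it writes $L_{22}(x)^{-1}$, namely that every deleted complex must reach a retained complex along a directed path of reactions (equivalently, that $\mathcal{V}_r$ is transient for the Markov chain generated by $-L(x)^T$). Identifying that hypothesis, and the fact that the directed, non-symmetric setting is where the real work lies, is a genuine improvement over the paper's one-line proof of this item.

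Two caveats on your version. First, with the paper's convention $L=\Delta-A$ the off-diagonal entries of $L(x)$ are $-a_{\sigma\pi}\le 0$, and what you prove (correctly, and consistently with that convention) is that the off-diagonal entries of $\hat L(x)$ are \emph{nonpositive}; the proposition's wording ``nonnegative'' carries the opposite sign and is evidently a slip in the statement, but you should flag rather than silently normalize it. Second, your parenthetical that ordinary connectedness of the complex graph guarantees the reachability condition is only true when the graph is effectively undirected (e.g.\ all reactions reversible, as in the paper's examples); for a general directed complex graph you need strong connectivity or the reachability hypothesis itself, and similarly strict positivity of every diagonal entry of $\hat L(x)$ requires each retained complex to possess at least one outgoing reaction, which again is automatic only in the reversible case.
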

\begin{proof}
(\emph{1}.) Follows from the proof of \cite[Theorem 3.11]{Niezink}; see also \cite{vdsSCL} for the case of symmetric $L$.

\medskip{}

(\emph{2}.) Without loss of generality, assume that the last $c-\hat{c}$ rows and columns of $L(x)$ correspond to the vertex set $\mathcal{V}_r$. Consider a partition of $L(x)$ given by
\begin{equation}\label{eq:part}
L(x)=\begin{bmatrix}
L_{11}(x) & L_{12}(x)\\
L_{21}(x) & L_{22}(x)
\end{bmatrix}
\end{equation}
where $L_{11}(x) \in \mR^{\hat{c}\times \hat{c}}$, $L_{12}(x) \in \mR^{\hat{c} \times (c-\hat{c})}$, $L_{21}(x)\in \mR^{(c-\hat{c})\times \hat{c}}$ and $L_{22}(x)\in \mR^{(c-\hat{c})\times(c-\hat{c})}$. It is easy to see that
\[
\hat{L}(x)=L_{11}(x)-L_{12}(x)L_{22}(x)^{-1}L_{21}(x)
\]
Since $\mathds{1}_{c}^TL(x)=0$, we obtain
\[
\mathds{1}_{\hat{c}}^TL_{11}(x)+\mathds{1}_{c-\hat{c}}^TL_{21}(x) = \mathds{1}_{\hat{c}}^TL_{12}(x)+\mathds{1}_{c-\hat{c}}^TL_{22}(x) = 0
\]
Using the above equations, we get
\begin{eqnarray*}
\mathds{1}_{\hat{c}}^T\hat{L}(x) = \mathds{1}_{\hat{c}}^T\big(L_{11}(x)-L_{12}(x)L_{22}(x)^{-1}L_{21}(x)\big) \\ 
= -\mathds{1}_{c-\hat{c}}^TL_{21}(x)+\mathds{1}_{c-\hat{c}}^TL_{22}(x)L_{22}(x)^{-1}L_{21}(x)=0
\end{eqnarray*}
\end{proof}
From the above result, it follows that $\hat{L}(x)$ obeys all the properties of the weighted Laplacian of a reaction network corresponding to the complex graph with vertex set $\mathcal{V}-\mathcal{V}_r$. Thus Proposition \ref{prop:WL} can be directly applied to the complex graph, yielding a reduction of the chemical reaction network by reducing the number of complexes. Consider a reaction network with boundary fluxes described as in equation (\ref{stoichiometry})
\begin{equation}\label{eq:stanform}
\Sigma: \quad \dot{x} = Z\left(v_b -  L(x) \mathrm{Exp} \big(Z^T \mathrm{Ln}(x)\big)\right)
\end{equation}
Reduction will be performed by {\it deleting certain complexes in the complex graph}, resulting in a reduced complex graph with weighted Laplacian $\hat{L}(x)$. Furthermore, leaving out the corresponding columns of the complex stoichiometric matrix $Z$ one obtains a reduced complex stoichiometric matrix $\hat{Z}$ (with as many columns as the remaining number ($\hat{c}$) of complexes in the complex graph). Consider a corresponding partition of $L$ given by equation (\ref{eq:part}). Define $P:=\begin{bmatrix}I_{\hat{c}} & -L_{12}L_{22}^{-1}\end{bmatrix}$. Now consider the reduced model
\begin{eqnarray}\label{reduced}
\nonumber \hat{\Sigma}: \quad \dot{x} &=& \hat{Z}\left(Pv_b-\hat{L}(x) \Exp \big(\hat{Z}^T \Ln(x)\big)\right) \\
&=& \hat{Z}P\left(v_b -  L(x) \mathrm{Exp} \big(Z^T \mathrm{Ln}(x)\big)\right).
\end{eqnarray}
Note that $\hat{\Sigma}$ is again a {\it chemical reaction network} governed by enzyme kinetics, with a reduced number of complexes and with, in general, a different set of boundary fluxes and reactions (edges of the complex graph).  

An interpretation of the reduced network $\hat{\Sigma}$ can be given as follows. Consider a subset $\mathcal{V}_r$ of the set of all complexes, which we wish to leave out in the reduced network. Consider the partition of $L(x)$ as given by equation (\ref{eq:part}) and corresponding partitions of $Z$ and $v_b$ given by
\[
Z = \begin{bmatrix} Z_1 & Z_2 \end{bmatrix}; \qquad v_b=\begin{bmatrix}v_{b1} & v_{b2}\end{bmatrix}^T,
\]
where $\mathcal{V}_r$ corresponds to the last part of the indices (denoted by $2$), in order to write out the dynamics of $\Sigma$ as
\[
\dot{x} = Z\begin{bmatrix}v_{b1} \\ v_{b2}\end{bmatrix}-Z\begin{bmatrix} L_{11}(x) & L_{12}(x) \\ L_{21}(x) & L_{22}(x) \end{bmatrix} 
\begin{bmatrix} \Exp \big( Z_1^T \Ln (x)\big) \\ \Exp \big(Z_2^T \Ln(x)\big) \end{bmatrix}
\]
Consider now the auxiliary dynamical system
\[
\begin{bmatrix} \dot{y}_1 \\ \dot{y}_2 \end{bmatrix} = \begin{bmatrix}v_{b1} \\ v_{b2}\end{bmatrix} - \begin{bmatrix} L_{11}(x) & L_{12}(x) \\ L_{21}(x) & L_{22}(x) \end{bmatrix} 
\begin{bmatrix} w_1 \\ w_2 \end{bmatrix}
\]
where we impose the constraint $\dot{y}_2 =0$. It follows that 
\[
w_2 = - L_{22}(x)^{-1}(v_{b2}-L_{21}(x)w_1), 
\]
leading to the reduced auxiliary dynamics
\begin{eqnarray*}
\dot{y}_1 &=& (v_{b1}-L_{12}(x)L_{22}(x)^{-1}v_{b2})-\hat{L}(x) w_1 \\
&=& Pv_{b}-\hat{L}(x) w_1
\end{eqnarray*}
Putting back in $w_1 = \Exp \big(\hat{Z}_1^T \Ln(x)\big)$, making use of 
$
\dot{x} = Z_1 \dot{y}_1 + Z_2 \dot{y}_2 = Z_1 \dot{y}_1 = \hat{Z} \dot{y}_1
$, we then obtain the reduced network $\hat{\Sigma}$ given in (\ref{reduced}). An appropriate choice of $\mathcal{V}_r$ will ensure that some of the elements of $x$ have derivative zero in $\hat{\Sigma}$ leading to lesser number of state variables in $\hat{\Sigma}$ as compared to $\Sigma$.

Assume that a given biochemical reaction network is asymptotically stable around an equilibrium. When perturbed from the equilibrium, such a reaction network has certain species reaching their equilibrium much faster than the remaining ones. The principle behind our model reduction method is to impose the condition that complexes entirely made up of such species remain at constant concentrations.
\begin{example}\rm \label{eg:2}
We now consider an example of a simple reversible reaction network governed by Michaelis-Menten kinetics and apply the model reduction procedure described in this paper. This reaction is shown below.
\begin{equation}\label{eq:eg}
X_1+X_2 \rightleftharpoons X_3+X_4 \rightleftharpoons X_5+X_6
\end{equation}
For $i=1,\ldots,6$, let $x_i$ denote the concentration of $X_i$. Let $v_1$ and $v_2$ denote overall reaction rates in the forward direction of the first and the second reversible reactions respectively. For $i=1,\ldots,4$, let $K_{m_1i}$ denote the Michaelis constant of $X_i$ for the first reaction. For $i=3,\ldots,6$, let $K_{m_2i}$ denote the Michaelis constant of $X_i$ for the second reaction. For $i=1,2$, let $k_{i}^f$ and $k_{i}^r$ denote the forward and reverse rate constants of the $\ith$ reaction. Define $x:=$ col$(x_1,x_2,x_3,x_4,x_5,x_6)$,
\begin{eqnarray*}
p_1(x):= \left(1+\frac{x_1}{K_{m_11}}+\frac{x_2}{K_{m_12}}\right)\left(1+\frac{x_3}{K_{m_13}}+\frac{x_4}{K_{m_14}}\right), \\
p_2(x):= \left(1+\frac{x_3}{K_{m_23}}+\frac{x_4}{K_{m_24}}\right)\left(1+\frac{x_5}{K_{m_25}}+\frac{x_6}{K_{m_26}}\right).
\end{eqnarray*}
Then $v_1 = \frac{k_1^fx_1x_2-k_1^rx_3x_4}{p_1(x)}$ and $v_2 = \frac{k_2^fx_3x_4-k_2^rx_5x_6}{p_2(x)}$. Now consider the reduced network
\[
X_1+X_2 \rightleftharpoons X_5+X_6
\]
obtained by deleting the complex $X_3+X_4$ from the network (\ref{eq:eg}). Applying the procedure described in this section, we obtain the following expression for the overall rate ($v$) in the forward direction of the reduced network:
\begin{equation}\label{eq:MM}
v=\frac{k_3^fx_1x_2-k_3^rx_5x_6}{1+\frac{x_1}{K_{m_31}}+\frac{x_2}{K_{m_32}}+\frac{x_5}{K_{m_35}}+\frac{x_6}{K_{m_36}}}.
\end{equation}
where $k_3^f,k_3^r,K_{m_31},K_{m_32},K_{m_35},K_{m_36}$ are positive constants. Note that equation (\ref{eq:MM}) represents the rate of a reaction governed by Michaelis-Menten kinetics. Thus for this example, our procedure yields a reduced model with 6 parameters while the original model has 12 parameters.
\end{example}

\subsection{Effect of Model Reduction}\label{sec:effect}
In this section, we show the effect of our model reduction method on a particular type of networks with a single linkage class. In other words, we give an interpretation of our reduced model in terms of its corresponding full model for a particular type of networks. Note that deletion of a set of complexes in one linkage class does not affect the reactions of the other linkage classes of the network. 
 
\begin{equation}\label{Type1}
\text{Full Network:} \qquad \mathcal{C}_1 \rightleftharpoons \mathcal{C}_2 \rightleftharpoons \mathcal{C}_3 \rightleftharpoons \cdots \cdots \rightleftharpoons \mathcal{C}_n
\end{equation}
\begin{equation}\label{Type1red}
\text{Reduced Network:} \qquad \mathcal{C}_1 \rightleftharpoons \mathcal{C}_3\rightleftharpoons \cdots \cdots \rightleftharpoons \mathcal{C}_n
\end{equation}

Consider a reaction network with reversible reactions occuring between consecutive elements of the set of distinct complexes $\{\mathcal{C}_1,\mathcal{C}_2, \ldots, \mathcal{C}_n\}$ as in (\ref{Type1}). The reduced network obtained by deleting the complex $\mathcal{C}_2$ looks as in (\ref{Type1red}). Instead of the two reversible reactions, $\mathcal{C}_1 \rightleftharpoons \mathcal{C}_2$ and $\mathcal{C}_2 \rightleftharpoons \mathcal{C}_3$ in the full network, there is one reaction $\mathcal{C}_1 \rightleftharpoons \mathcal{C}_3$ in the reduced network. This reaction is a reversible reaction governed by enzyme kinetics. The expression for the rate of this reaction in terms of the expression for the rates of the first two reversible reactions of the full network can be found using the technique described in this section. All the remaining reactions of the reduced network occur in the same way as in the full network. 

Assuming that the network is asymptotically stable around a certain equilibrium, when perturbed from the equilibrium, the transient behaviour of the metabolites involved in the complexes of the reduced model will approximately be the same as that of the full model if the metabolites involved in $\mathcal{C}_2$ reach their steady states much faster than the remaining metabolites. In this case, we can safely impose the condition that the metabolites involved in $\mathcal{C}_2$ are at constant concentration in order to obtain the reduced model (\ref{Type1red}) with similar transient behaviour as that of (\ref{Type1}). The rule of induction may be applied in order to further reduce the model by deleting more complexes. 

A special case of networks (\ref{Type1}) is the following:
\begin{equation}\label{eq:sp}
\mathcal{C}_1 \rightleftharpoons \mathcal{C}_2
\end{equation}
Deletion of the complex $\mathcal{C}_2$ in this case is equivalent to deletion of the linkage class from the network. Such a deletion provides a close approximation to the original network if the reaction (\ref{eq:sp}) has very little effect on the dynamics of the network, i.e if the reaction reaches its steady state much faster than the remaining reactions of the network, assuming the network is asymptotically stable around a certain equilibrium.

Observe that for the model reduction of a given asymptotically stable biochemical reaction network governed by enzyme kinetics, it is important to determine which of the complexes are to be deleted so that the reduced model approximates the full model reasonably well.
 
\section{Yeast Glycolysis Model}
We have applied our model reduction procedure on the model of yeast glycolysis described in \cite{Karen}. The schematic of the model is shown in Fig. \ref{fig:YGM}. The corresponding detailed mathematical model can be found in \cite{Karen}.
\begin{figure}[t]
\begin{center}
\scalebox{0.9}{
\includegraphics[width=\columnwidth]{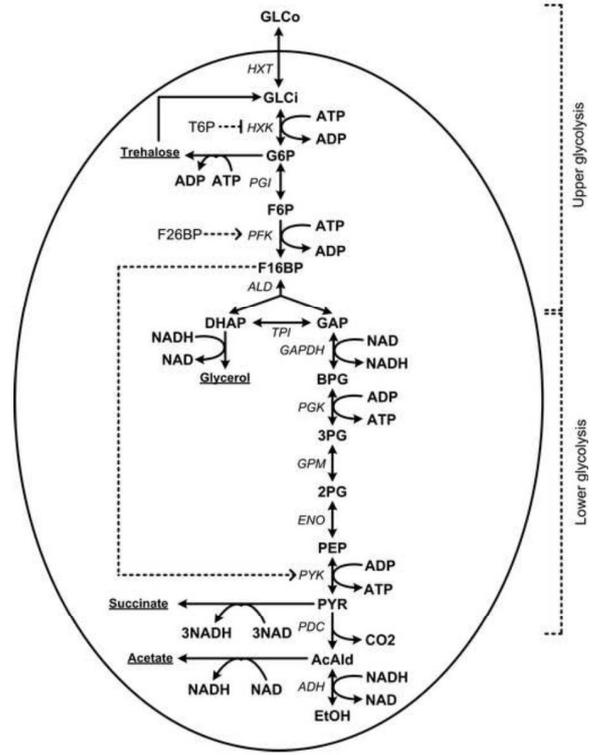}}
\end{center}
\caption{The yeast glycolysis model as discussed in \cite{Karen}. The figure is taken from \cite{Karen}.}
\label{fig:YGM}
\end{figure}

For the modelling, we have considered the following as external fluxes: 
\begin{itemize}
\item in/efflux of extracellular glucose (Glco) to/from intracellular glucose (Glci);
\item constant influx of trehalose to Glci;
\item constant efflux of trehalose from glucose 6-phosphate (G6P);
\item constant efflux of glycerol from dihydroxyacetone phosphate (DHAP);
\item efflux of succinate from pyruvate (PYR) proportional to the concentration of PYR;
\item efflux of acetate from acetaldehyde (AcAld) proportional to the concentration of AcAld;
\item in/efflux of ethanol to/from AcAld catalyzed by alcohol dehydrogenase (ADH).
\end{itemize}

The reactions of the network are governed by enzyme kinetics and we can write the equations of the model in the same form as (\ref{eq:stanform}). Conditions of a glucose pulse as described in \cite{Karen} are assumed for the model. According to these conditions, for $t<0$, concentrations of Glco and ATP are 0.2 mM and 5 mM respectively, and for $t \geq 0$, these are equal to 5 mM and 2.5 mM respectively. It is assumed that ethanol (EtoH) has a constant concentration and the network is at equilibrium for $t<0$. With these conditions, it is found that the model is asymptotically stable as $t \rightarrow \infty$.

Various combinations of complexes have been considered for deletion in order to obtain a reduced model that closely mimics the transient behaviour of the original model. It is found that deletion of the complexes G6P, 3-phosphoglycerate (3PG), 2-phosphoglycerate (2PG) and phosphoenoylpyruvate (PEP) from the original model produces the best results in this respect. The schematic of the reduced model is shown in Fig. \ref{fig:RYGM}.

\begin{figure}[t]
\begin{center}
\scalebox{0.95}{
\includegraphics[width=\columnwidth]{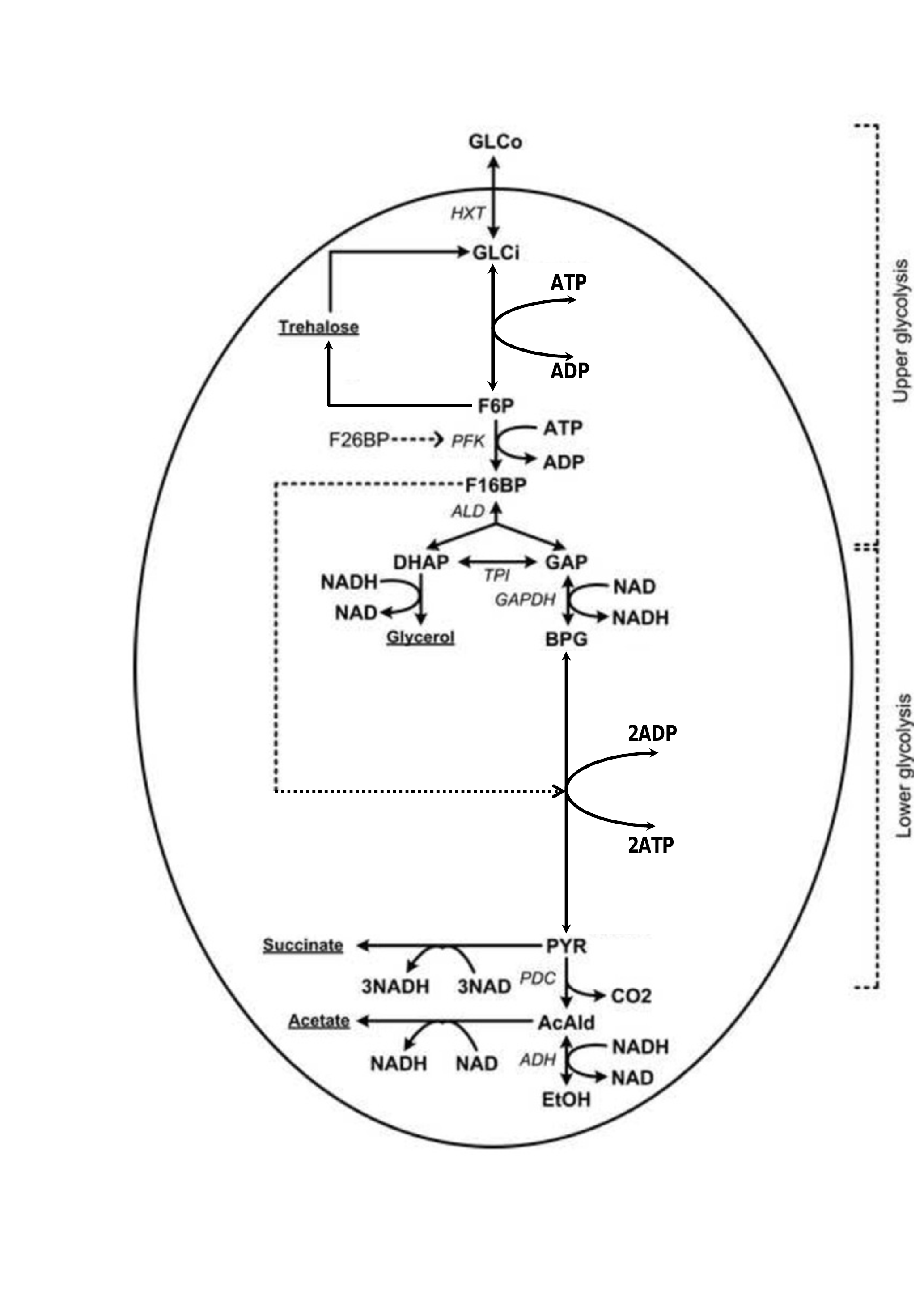}}
\end{center}
\caption{The reduced model of yeast glycolysis. The figure is a modified form of the original glycolysis model in \cite{Karen}.}
\label{fig:RYGM}
\end{figure} 

It is observed that there is a good agreement between the transient behaviours of most of the metabolites when comparing the original model to the reduced model. This implies that in the original model, under the given conditions of the glucose pulse, the metabolites G6P, 3PG, 2PG and PEP reach their steady states much faster than the remaining metabolites of the network. Thus imposing that these metabolites stay at a constant concentration has very little effect on the dynamics of the network. The graphs of comparison of the evolution of concentrations of Glci and PYR are shown in Figures \ref{fig:Glci} and \ref{fig:Pyr} respectively.

\begin{figure}[t]
\begin{center}
\scalebox{1}{
\includegraphics[width=\columnwidth]{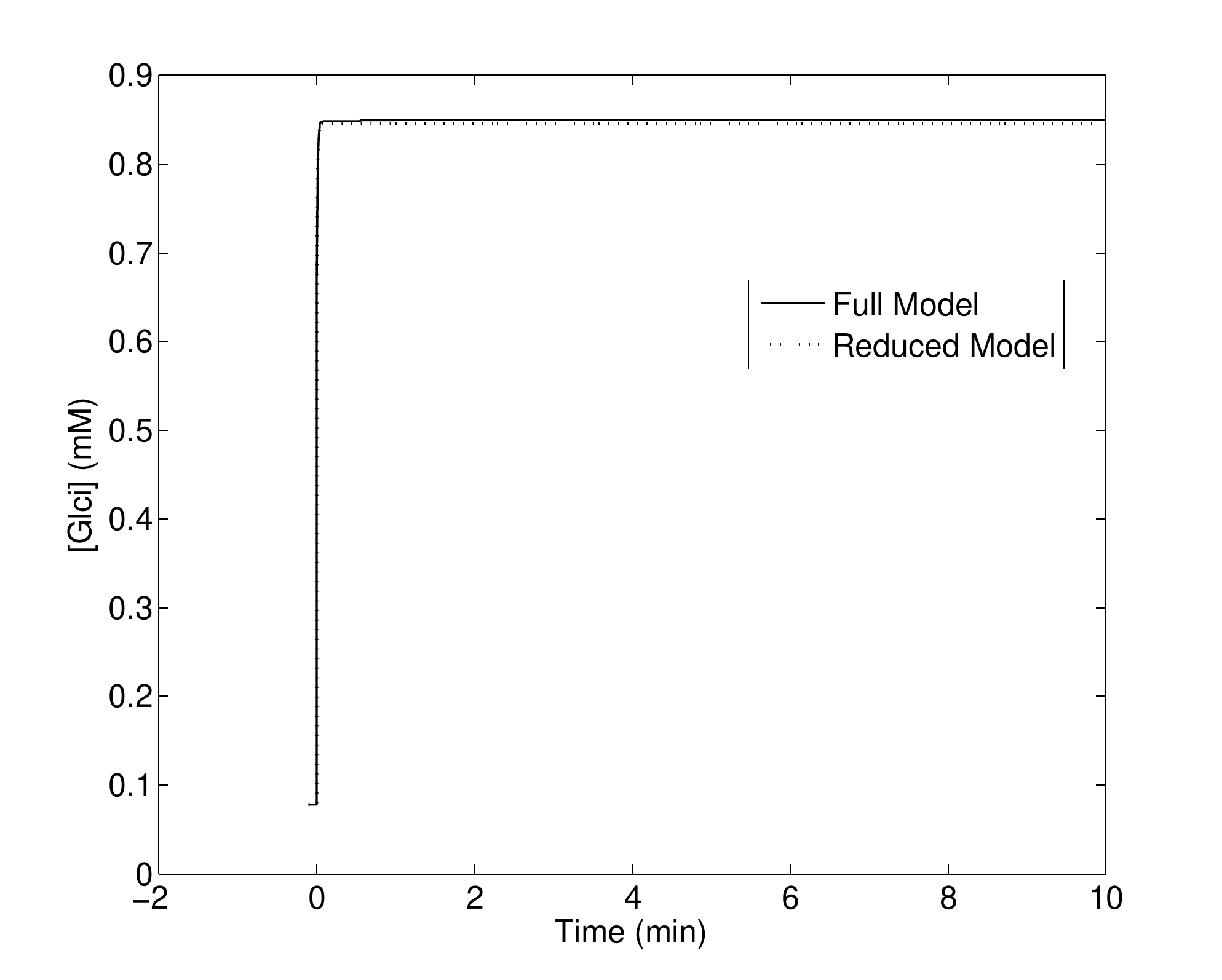}}
\end{center}
\caption{Concentration of Glci vs time}
\label{fig:Glci}
\end{figure} 
\begin{figure}[t]
\begin{center}
\scalebox{1}{
\includegraphics[width=\columnwidth]{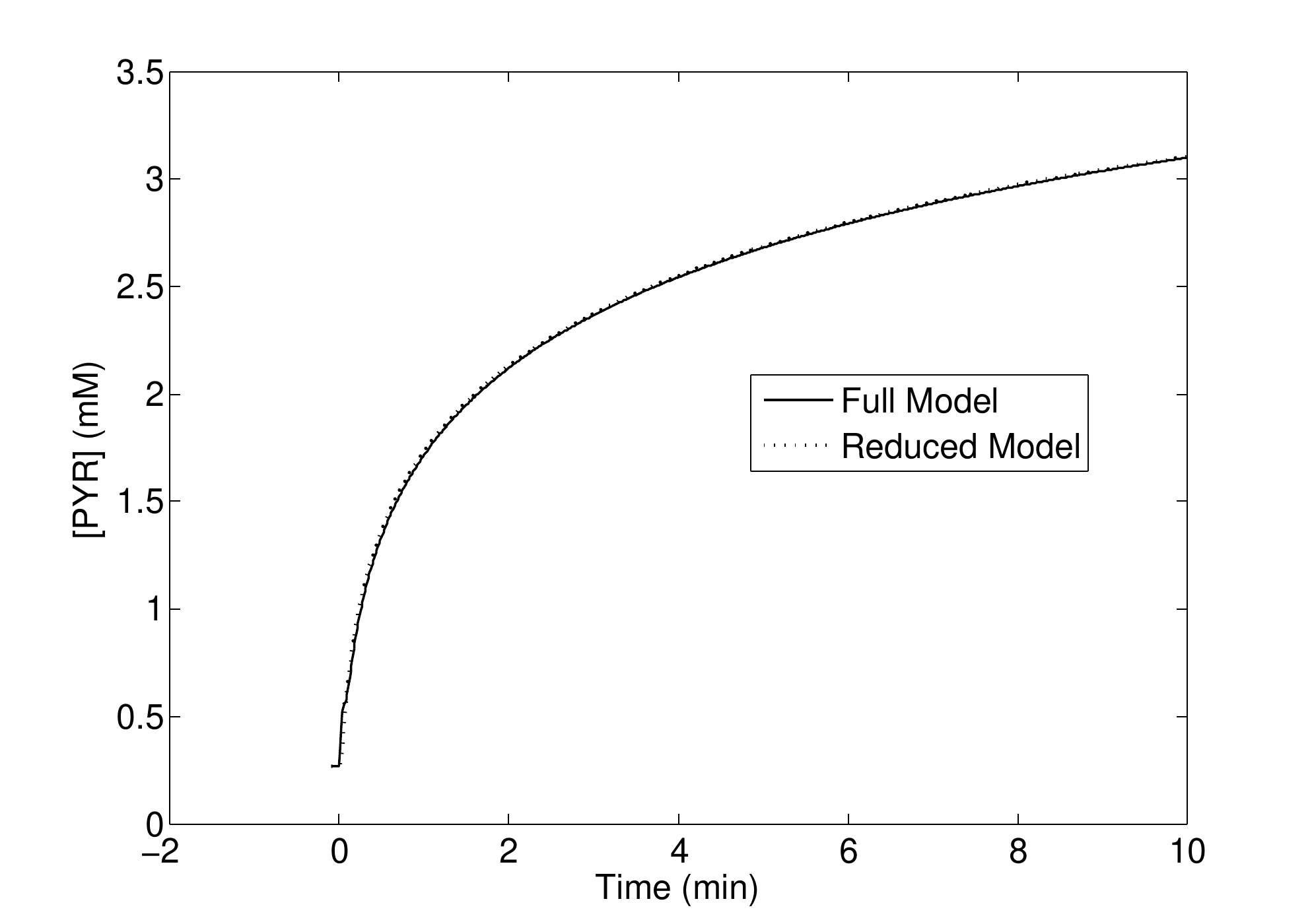}}
\end{center}
\caption{Concentration of PYR vs time}
\label{fig:Pyr}
\end{figure}

\section{Conclusion}
In this paper, we have outlined a method for model reduction of biochemical reaction networks that are asymptotically stable around an equilibrium. When perturbed from the equilibrium, such a reaction network has certain species reaching their equilibrium much faster than the remaining ones. The principle behind our model reduction method is to impose the condition that a few of such species remain at constant concentrations. This is achieved by deletion of complexes entirely made up of such species from the complex graph of the network. Apart from a reduction in the number of state variables, our model reduction method also leads to a reduction in the number of reactions and parameters of the model. We have applied our method on a yeast glycolysis model and have observed a good agreement between the transient behaviours of the original and the reduced models.  

\end{document}